\title{ 
Achieving the Uniform Rate Region of General Multiple Access Channels by Polar Coding\\[0.50ex]}
\renewcommand{\markboth}[2]
{\renewcommand{\leftmark}{#1}\renewcommand{\rightmark}{#2}}
\author{
Hessam Mahdavifar, ~\IEEEmembership{Member,~IEEE,}
        Mostafa El-Khamy, ~\IEEEmembership{Senior Member,~IEEE,} \\
        Jungwon Lee, ~\IEEEmembership{Senior Member,~IEEE,}
       and Inyup Kang, ~\IEEEmembership{Member,~IEEE}

}
\newtheorem{theorem}{{Theorem}}
\newtheorem{lemma}[theorem]{{Lemma}}
\newtheorem{corollary}[theorem]{{Corollary}}
\newcommand{\cA}{{\cal A}} 
\newcommand{\cB}{{\cal B}}
\newcommand{\cD}{{\cal D}}
\newcommand{\cG}{{\cal G}} 
\newcommand{\cI}{{\cal I}} 
\newcommand{\cJ}{{\cal J}}
\newcommand{\cM}{{\cal M}} 
\newcommand{\cN}{{\nu}}
\newcommand{\cS}{{\cal S}} 
\newcommand{\cT}{{\cal T}}
\newcommand{\cU}{{\cal U}}
\DeclareMathAlphabet{\mathbfsl}{OT1}{ppl}{b}{it} 
\newcommand{\bQ}{\mathbfsl{Q}} 
\newcommand{\bR}{\mathbfsl{R}}
\def\QEDclosed{\mbox{\rule[0pt]{1.3ex}{1.3ex}}} 
\def\QED{\QEDclosed} 
\newcommand{\be}[1]{\begin{equation}\label{#1}}
\newcommand{\ee}{\end{equation}} 
\newcommand{\eq}[1]{(\ref{#1})}
\renewcommand{\leq}{\leqslant}
\renewcommand{\ge}{\geqslant} 
\renewcommand{\geq}{\geqslant}
\newcommand{\script}[1]{{\mathscr #1}}
\renewcommand{\Bbb}{\mathbb}
\newcommand{\N}{{\Bbb N}}
\newcommand{\R}{{\Bbb R}}
\newcommand{\Tref}[1]{Theo\-rem\,\ref{#1}}
\newcommand{\Lref}[1]{Lem\-ma\,\ref{#1}}
\newcommand{\Cref}[1]{Co\-ro\-lla\-ry\,\ref{#1}}
\newcommand{\deff}{\mbox{$\stackrel{\rm def}{=}$}}
\newcommand{\Gn}{G^{\otimes n}}
\newcommand{\sX}{\script{X}}
\newcommand{\sY}{\script{Y}}
\newcommand{\shalf}{\mbox{\raisebox{.8mm}{\footnotesize $\scriptstyle 1$}
\footnotesize$\!\!\! / \!\!\!$ \raisebox{-.8mm}{\footnotesize
$\scriptstyle 2$}}}
\newcommand{\AQ}{\cA_{\bQ}^{(\pi)}}
\newcommand{\rp}{\bR^{(\pi)}}
\begin{document}

\maketitle

\begin{abstract}

We consider the problem of polar coding for transmission over $m$-user multiple access channels. In the proposed scheme, all users encode their messages using a polar encoder, while a multi-user successive cancellation decoder is deployed at the receiver. The encoding is done separately across the users and is independent of the target achievable rate. For the code construction, the positions of information bits and frozen bits for each of the users are decided jointly. This is done by treating the  polar transformations across all the $m$ users as a single polar transformation with a certain \emph{polarization base}. We characterize the resolution of achievable rates on the dominant face of the uniform rate region in terms of the number of users $m$ and the length of the polarization base $L$. In particular, we prove that for any target rate on the dominant face, there exists an achievable rate, also on the dominant face, within the distance at most $\frac{(m-1)\sqrt{m}}{L}$ from the target rate. We then prove that the proposed MAC polar coding scheme achieves the whole uniform rate region with fine enough resolution by changing the decoding order in the multi-user successive cancellation decoder, as $L$ and the code block length $N$ grow large. The encoding and decoding complexities are $O(N \log N)$ and the asymptotic block error probability of $O(2^{-N^{0.5 - \epsilon}})$ is guaranteed. Examples of achievable rates for the $3$-user multiple access channel are provided. 

\end{abstract}

\begin{keywords} 
polar code, 
multiple access channel,
uniform rate region, successive cancellation decoding
\end{keywords}

\section{Introduction} 
\label{sec:Introduction}

\noindent 
\PARstart{C}{hannel} polarization was introduced by Ar{\i}kan in the seminal work of \cite{Arikan}. Polar codes are the first family of codes for the class of binary-input symmetric discrete memoryless channels that are provable to be capacity-achieving with low encoding and decoding complexity~\cite{Arikan}. Polar codes and polarization phenomenon have been successfully applied to various problems such as wiretap channels \cite{MV}, data compression \cite{Arikan2,Ab}, broadcast channels \cite{MH}, and bit-interleaved coded modulation (BICM) channels~\cite{BICMpolar}. 


The capacity region of multiple access channels is fully characterized by Ahlswede \cite{A} and Liao \cite{L} for the case that the sources transmit independent messages. However, in this paper, we are only interested in the \emph{uniform rate region}. For a multiple access channel, the uniform rate region is the achievable region corresponding to the case that the input distributions are uniform. The single user counterpart of the uniform rate region is indeed the symmetric capacity. It is well-known that the uniform rate region of a multiple access channel (MAC) can be achieved by single user capacity-achieving codes with time-sharing between the users or with the rate-splitting approach \cite{RU, GRU}.

There has been a series of research on channel polarization and achieving the uniform rate region for MAC \cite{STY,AT2,S,CISS} and for the dual problem of source coding \cite{Arikan3} without the time-sharing or the rate-splitting method. In these works, the encoder core for each user is the Ar{\i}kan's polarization matrix. The encoder does not change in time, as in time-sharing, nor depends on the target rate as in the rate-splitting case. The two-user MAC polarization was first studied in \cite{STY}. It was shown that at least one point on the \emph{dominant face} of the uniform rate region can be achieved by the polar code constructed based on the MAC polarization. Ar{\i}kan proposed a scheme for the two-user source coding problem based on \emph{monotone} chain rule expansions and showed that the uniform rate region is achievable with polar coding \cite{Arikan3}. Although not explicitly shown, it was also hinted in \cite{Arikan3} that this method has duals that could be applied to achieve the capacity of the MAC without time-sharing. This is investigated in \cite{CISS}, where we showed how polar coding with multi-user successive cancellation decoding can be used to achieve the uniform rate region of the two-user MAC, along with methods to improve the finite length performance, and performance comparisons with the time-sharing method. A method for improving the performance of the two-user MAC polar coding with list decoding has been described in \cite{S}. The two-user MAC has been also addressed in the related context of interference networks \cite{WS}, where it has been pointed in \cite{WS} that the results can be generalized to the $m$-user MAC. A method for channel polarization in the general case of $m$-user MAC was studied in \cite{AT2}, and the set of the extremal MACs under this channel polarization were characterized. It was also shown that at least one point on the dominant face can be achieved by this MAC polarization.

The contributions of this paper can be summarized as follows. We prove that the entire uniform rate region for the general $m$-user MAC can be achieved by polar coding, without any time-sharing or rate-splitting between the users. This generalizes the result of \cite{Arikan3} from two-user MACs (shown for the two-source Slepian-Wolf problem) to $m$-user MACs. One main contribution of this paper is providing a concrete alternative proof for the \emph{polar splitting} problem of \cite{WS}. We further characterize the resolution of achievable rates on the dominant face and prove that for any target rate on the dominant face, there exists an achievable rate, also on the dominant face, with our proposed MAC polar code within distance $\frac{(m-1)\sqrt{m}}{L}$ of the target rate, where $L$ is the length of the polarization base. It is shown how the channel polarization can be applied on top of a polarization base in order to construct MAC polar coding schemes with a low-complexity multi-user successive cancellation decoder. As a result, all the rates on the dominant face of the uniform rate region can be achieved by MAC polar coding schemes as the code block length and the length of polarization base grows large.

The rest of this paper is organized as follows. In Section\,\ref{sec:two}, we provide notation conventions and review some background on polar codes and multiple access channels. In Section\,\ref{sec:three}, a framework for multi-user polar transformation is discussed and the notion of MAC polarization is introduced followed by a numerical example. In Section\,\ref{sec:four}, the encoding and decoding of MAC polar codes is discussed and a numerical example is provided along with comparisons with the time-sharing method. In Section\,\ref{sec:five}, we characterize the resolution of achievable rates on the dominant face of the uniform rate region and establish our main result in this regard. The achievability property of the MAC polar coding scheme is shown in Section\,\ref{sec:six}. In Section\,\ref{sec:seven}, detailed discussions about other approaches in \cite{STY,AT2, WS, RU, GRU} to achieve the capacity of the MAC are provided.  Concluding remarks are provided Section\,\ref{sec:eight}.

\section{Preliminaries}
\label{sec:two}

\subsection{Notation Convention \label{Sec:Not}}

In this subsection, some of the common notations used through this paper are defined. 

A binary-input discrete memoryless channel (B-DMC) $W$ with input alphabet $\sX = \left\{0,1\right\}$ and output alphabet $\sY$ is also represented as $W: \sX \rightarrow \sY$. The channel $W$ is specified with the transition probabilities which are also denoted by $W$ with some slight abuse of notation. For any $x \in \sX$ and $y \in \sY$, $W(y|x)$ denote the probability of receiving $y$ assuming $x$ is transmitted. 

For any positive integer $n$, let $[\![n]\!]$ denote the set of positive integers less than or equal to $n$. The parameter $m$ is reserved for the number of users in the multiple access channel model. An $m$-user binary input multiple access channel $W : \sX^m \rightarrow \sY$ is also specified with the transition probabilities denoted by $W$, where $\sX^m$ is the Cartesian product of $\sX$ with itself $m$ times. The elements of $\sX^m$ are represented as $m$-tuples $(x[1],x[2],\dots,x[m])$, where $x[j] \in \sX$ denote the input to the channel by $j$-th user, for $j \in [\![m]\!]$. Then for any $y \in \sY$, $W(y|x[1],x[2],\dots,x[m])$ denote the probability of receiving $y$ given that $x[j]$ is transmitted by the $j$-th user, for $j \in [\![m]\!]$. For any $\cJ \subseteq [\![m]\!]$, let $x[\cJ]$ denote the set $\left\{x[j]: j \in \cJ\right\}$. Also, $\cJ^c$ denote the complement of $\cJ$ in $[\![m]\!]$. 

Following the convention, random variables are denoted by upper case letters and their instances are denoted by lower case letters, except for $n$ and $N$ which are reserved to specify the code's block length and $l$ and $L$ which are reserved to specify the length of polarization base. For instance, the input to the channel by the $j$-th user can be a uniform binary random variable $X[j]$ and an an instance of this random variable is denoted by $x[j]$. 

An ordered sequence $\left\{x_i\right\}_{i=1}^N$ is represented by $x_1^N$ for ease of notation. 
The sequence $x_1^N$ is also regarded as a vector of length $N$ for the purpose of matrix operations at the encoder which shall be clear from the context.
Similarly, for any $1\leq i \leq j \leq N$, $x_i^j$ denote the sequence $x_i,x_{i+1},\dots,x_j$. This notation is used for input and output sequences of the channel. Since we will be dealing with $m$-user MACs, the sequence $x_1^N[j] = \left\{x_i[j]\right\}_{i=1}^N$ denotes the input sequence to the channel by the $j$-th user, and the $m$-tuple $(x_i[1],x_i[2],\dots,x_i[m])$ denotes the input to the channel by the $m$ users at the $i$-th channel use. Let  $\left(x_1^N[1]\,x_1^N[2]\,\dots\,x_1^N[m]\right)$ simply denote the sequence of length $mN$, formed by concatenating $m$ sequences each of length $N$. 
For notational convenience, we will let $\cM = [\![m]\!]$, and refer to the sequence formed by the concatenation $x_1^N[j]$ for $j \in \cM$ as $x_1^N[\cM]$.

For a positive integer $N$, a permutation $\pi$ is a bijection from $[\![N]\!]$ to $[\![N]\!]$  and for any $i \in [\![N]\!]$, $\pi(i)$ is the image of $i$. Consecutively, $\pi^{-1}$ is the inverse function of $\pi$, which is another permutation, such that for any $i \in [\![N]\!]$, $\pi(\pi^{-1}(i)) = i$. The set $[\![N]\!]$ over which the permutation $\pi$ is defined shall be clear from the context. The permutation $\pi$ can operate on a sequence $x_1^N$ to relocate $x_i$ to the position indexed by $\pi(i)$. The operation of $\pi$ on the sequence $x_1^N$ is denoted by $\pi \circ x_1^N$. Then by definition we have
$$
\pi \circ x_1^N = \left\{x_{\pi^{-1}(i)}\right\}_{i=1}^N.
$$
For a permutation $\pi$ on $[\![\!N]\!]$ and any $k$, where $k,N \in \N$, we define the permutation $\pi^{(k)}$ on $[\![kN]\!]$, as follows. For any $i \in [\![kN]\!]$, let $i = k(r-1)+q$, where $q \in [\![k]\!]$ and $r \in [\![N]\!]$. Then $\pi^{(k)}(i) = k(\pi(r)-1)+q$. For instance, if $N = 2$ and $\pi(1) = 2, \pi(2) =1$, then for any $k$, the permutation $\pi^{(k)}$ replaces the first and second sub-sequence of length $k$ with each other.

The points in $\R^m$ are represented by bold upper case letters. A point $\bR \in \R^m$ can be also represented by its $m$ coordinates $R_1,R_2,\dots,R_m$ as $(R_1,R_2,\dots,R_m)$. 

\subsection{Polar Codes}

In this subsection, we provide a brief overview of polar codes and channel polarization ~\cite{Arikan, AT,Korada,KSU}. 

The \emph{channel polarization} phenomenon was discovered by Ar{\i}kan \cite{Arikan}. 
A basic binary polarization matrix is given by
\be{G-def}
G 
\ = \
\left[ 
\begin{array}{cc}
1 & 0\\
1 & 1\\ 
\end{array}
\right].
\ee

Consider two independent copies of a B-DMC $W$. The two input bits $(u_1,u_2)$, drawn from independent uniform distributions, are multiplied by $G$ and then transmitted over the two copies of $W$. One level of channel polarization is the mapping $(W,W) \rightarrow (W^{-}, W^{+})$, where 
$W^-: \{0,1\} \to \sY^2$, and  $W^+: \{0,1\} \to \{0,1\} \times \sY^2$ with the following channel transformation 
\begin{align}
\label{channel-com1}
&W \boxcoasterisk W(y_1,y_2|u_1) = \frac{1}{2}  \sum_{u_2 \in \{0,1\}} W(y_1 | u_1 \oplus u_2) W(y_2|u_2), \\
\label{channel-com2}
&W \circledast W(y_1,y_2,u_1|u_2) =  \frac{1}{2} W(y_1 | u_1 \oplus u_2) W(y_2|u_2).
\end{align}
$W \boxcoasterisk W$ and $W \circledast W$ are also denoted by $W^+$ and $W^-$. 

The channel polarization is continued recursively by further splitting $W^{-}$ and $W^{+}$ . This process can be explained best by means of Kronecker powers of $G$. Let $G^{\otimes 1} = G$ and for any $n > 1$:
$$
G^{\otimes n}
\ = \
\left[ 
\begin{array}{cc}
G^{\otimes (n-1)} & 0\\
G^{\otimes (n-1)} & G^{\otimes (n-1)}\\ 
\end{array}
\right]
$$
Let $N = 2^n$. Then $\Gn$ is an $N \times N$ polarization matrix. Let $U_1^N$ be a sequence of $N$ independent and uniform binary random variables. The polarization matrix $\Gn$ is multiplied by $U_1^N$ to get $X_1^N$. Then $X_i$'s are transmitted through $N$ independent copies of a B-DMC $W$. The output is denoted by $Y_1^N$. This transformation with input $U_1^N$ and output $Y_1^N$ is called the polar transformation. 

In the polar transformation, $N$ independent uses of $W$ are transformed into $N$ bit-channels, which are the channels that the encoded bits observe through the successive cancellation decoding.  Ar{\i}kan used the \emph{Bhattacharyya parameter} of $W$, denoted by $Z(W)$, to measure how good the binary-input channel $W$ is
$$ 
Z(W)
\,\ \deff\kern1pt
\sum_{y\in\sY} \!\sqrt{W(y|0)W(y|1)}.
$$

The channel polarization theorem is proved by showing that the fraction of good bit-channels, i.e., the bit-channels with Bhattacharyya parameter less than a certain threshold, approaches the symmetric capacity of $W$ as $N$ goes to infinity \cite{AT}. Then, for polar code construction, the idea is to transmit the information bits over the good bit-channels while freezing the input to the other bit-channels to a priori known values, such as zeros.

\subsection{Uniform Rate Region}

Let $X[1],X[2],\dots,X[m]$ be independent and uniform binary random variables. Then, the uniform rate region of $W$, denoted by $\cU(W)$, is defined to be the set of all points $\bR = (R_1,R_2,\dots,R_m) \in \R^m$ such that
\be{region-def}
0 \leq \sum_{j \in \cJ} R_j \leq I(X[\cJ];Y,X[\cJ^c]), \; \forall \cJ \subseteq [\![m]\!], 
\ee
where $I(\cdot;\cdot)$ represents the mutual information.

The uniform rate region is the set of all achievable points $\bR$ assuming that the input distributions are uniform.  The \emph{uniform sum-rate} of $W$, $\cI(W)$, is defined as follows:
$$
\cI(W) = I(X[1],X[2],\dots,X[m];Y)
$$ 
In general, in the context of this paper, any point $\bR = (R_1,R_2,\dots,R_m) \in \R^m$ is regarded as an $m$-tuple of rates where $R_j$ denotes the rate of the $j$-th user. Also, $\sum R_j$,  is referred to as the sum-rate of $\bR$. 

The dominant face of the uniform rate region, denoted by $\cD(W)$, is defined to be the set of points in $\cU(W)$, with the maximum sum-rate $\cI(W)$, i.e., the right inequality of \eq{region-def} is in fact equality for $\cJ = [\![m]\!]$. It can be observed that, the achievability of the uniform rate region $\cU(W)$ is equivalent to the achievability of its dominant face $\cD(W)$. Therefore, our focus throughout this paper is on the achievability of $\cD(W)$. 

\section{Multi-user Polar Transformation: Framework and Example}
\label{sec:three}

In this section, we discuss the framework considered in this paper for extending polar transformation concept from the single user case to the multi-user case under the MAC model. We also introduce the concept of MAC polarization base and provide a numerical example for a $3$-user MAC.

\subsection{MAC Polar Transformation}

Let $W$ be a given $m$-user binary-input discrete multiple access channel. Let also $n$ be a positive integer and $N = 2^n$. For $j = 1,2,\dots,m$, assume that $U_1^N[j]$ is a sequence of $N$ independent and uniformly distributed bits that represents the $j$-th user's message, and is independent of other users' messages. Let also $X_1^N[j] = U_1^N[j]\,G^{\otimes n}$. For $i=1,2,\dots,N$, the $m$-tuple $(X_i[1],X_i[2],\dots,X_i[m])$ is transmitted through the $i$-th independent copy of $W$ and the output is denoted by $Y_i$. 
Let $\pi$ be a permutation on $[\![mN]\!]$ to be applied on the concatenated sequence of input messages to give
$$
D_1^{mN} = \pi \circ \left(U_1^N[1]\,U_1^N[2]\,\dots\,U_1^N[m]\right)
$$
The permuted sequence $D_1^{mN}$ specifies the order in which the input bits are decoded by the multi-user successive cancellation decoder, as will be specified later in this section. 
The transformation between the input sequences $U_1^N[j]$ for $j \in [\![m]\!]$ and the output sequence $Y_1^N$ together with the permutation $\pi$, which enforces the decoding order, is called the MAC polar transformation. Also, $N$ is referred to as the length of the transformation. 


Let $W^N: \sX^{mN} \rightarrow \sY^N$ denote the channel consisting of $N$ independent copies of $W$, i.e., 
\be{Wnm}
W^N\kern-0.5pt(y^N_1|x_1^N[\cM]) 
\,\ \deff\,\
\prod_{i=1}^N W(y_i|x_i[\cM]).
\vspace{-0.25ex}
\ee
 For a given $N$, the combined channel $\widetilde{W}$ is defined with transition probabilities given by
\be{Wtildem}
\widetilde{W}(y^N_1|u_1^N[\cM]) 
\,\ \deff\,\ W^N\kern-0.5pt(y^N_1|x_1^N[\cM]),
\ee
where $x_1^N[j] = u_1^N[j]\Gn$, for $j \in \cM$. The bit-channels are defined with respect to the ordered sequence $d_1^{mN}$. For $i=1,2,\dots,mN$, the $i$-th bit-channel is defined as
\be{Wi-def-general}
W^{(i)}_{N}\bigl( y^N_1,d^{i-1}_1 | \hspace{1pt}d_i)\,\ \deff \,\
\frac{1}{2^{mN-1}}\hspace{-5pt}
\sum_{d_{i+1}^{mN} \in \{0,1\}^{mN-i}} \hspace{-12pt}
\widetilde{W} \Bigl(y^N_1\hspace{1pt}{\bigm|}\hspace{1pt}
d_1^{mN} \Bigr).
\ee

Let $D_1^{mN}$ be a priori uniform over $\{0,1\}^{mN}$. Then, it can be shown that  $W^{(i)}_{N}\bigl( y^N_1,d^{i-1}_1 | \hspace{1pt}d_i)$ is indeed the probability of the event $Y_1^N = y_1^N$ and $D_1^{i-1} = d_1^{i-1}$, given~the event $D_i = d_i$. 

Next, we introduce the notion of MAC polarization base. Let $L = 2^l$, where $l \leq n$ is fixed. Consider a permutation $\pi$ on $[\![mL]\!]$ and a MAC polar transformation of length $L$ associated with permutation $\pi$. This is referred to as the MAC polarization base associated with $\pi$ and $L$ is referred to as the length of the polarization base. The reason to introduce a new notion is that we fix the length $L$ of the MAC polarization base while letting $N$ to grow large in order to establish the MAC polarization theorem in Section \ref{sec:six}. To this end we consider MAC polar transformation of length $N$ associated with $\pi^{(N/L)}$, as defined in Section \ref{Sec:Not}. The MAC polar transformation associated with $\pi^{(N/L)}$ is said to be built upon the MAC polarization base of length $L$ associated with $\pi$, a relation which will be clarified in through the explanation of the decoder implementation in Section\,\ref{sec:four} and the MAC polarization in Section \ref{sec:six}. 

\subsection{Input-to-Output Mutual Information}

Consider MAC polarization bases of length $L$ with all possible permutations $\pi$ on the set $[\![mL]\!]$. Then $U_1^L[j]$ represents the $j$-th user's sequence of input bits and $Y_1^L$ denote the output sequence.  

For a given $\pi$, let $D_1^{mL}$ denote the permuted sequence of the concatenated input sequence $U_1^L[1],U_1^L[2],\dots,U_1^L[m]$. The mutual information $I(D_1^{mL};Y_1^L)$ can be expanded using the chain rule of mutual information as follows:
$$
I(D_1^{mL};Y_1^L) = \sum_{i=1}^{mL} I(D_i;Y_1^L,D_1^{i-1})
$$

Then for $i=1,2,\dots,mL$, let $I_i^{(\pi)}$ denote the $i$-th term in the above expansion, i.e.,
\be{Iji-dfn}
I_i^{(\pi)}\,\deff\, I(D_i;Y_1^L,D_1^{i-1}).
\ee
Also, for $j = 1,2,\dots,m$, let
\be{Rj-dfn}
R_j^{(\pi)}\ \deff\ \frac{1}{L} \sum_{i=(j-1)L+1}^{jL} I_{\pi(i)}^{(\pi)}.
\ee
In fact, $R_j^{(\pi)}$ is the input-to-output mutual information of $j$-th user normalized by the transformation length $L$. This parameter is also referred to as the allocated rate to the $j$-th user. As we will see in the Section\,\ref{sec:six}, the $m$-tuple of rates $\bR^{(\pi)} = (R_1^{(\pi)},R_2^{(\pi)},\dots,R_m^{(\pi)})$ is achievable with MAC polar coding built upon the polarization base of length $L$ and permutation $\pi$.  

The following lemma is the result of \eq{Iji-dfn} and \eq{Rj-dfn}.
\begin{lemma}
\label{sum-rate}
For any permutation $\pi$, the sum-rate of $\bR^{(\pi)}$ is $\cI(W)$.
\end{lemma}
\noindent
{\bf Remark.} \looseness=-1 
In fact a more general statement than \Lref{sum-rate} holds. For any permutation $\pi$, $\bR^{(\pi)}$ is a point on the dominant face $\cD(W)$. This will also follow as a result of the Section\,\ref{sec:six}, where we prove that these points are achievable by polar coding.

\subsection{MAC Polarization Bases for a $3$-User Case}

In this subsection, we consider the special case of polar coding for $3$-user multiple access channels, $m = 3$. Let $W$ be a $3$-user MAC. The uniform rate region of $W$ is in general a $3$-dimensional polyhedron, which is shown for an example, described later, in Fig.~\ref{region2}, where $R_1$, $R_2$ and $R_3$ are rates of user $1$, $2$ and $3$, respectively. 

Let $X[1]$, $X[2]$ and $X[3]$ be uniform and independent binary random variables which are the inputs to $W$ and let $Y$ denote the output. Let $I_1$, $I_2$ and $I_3$ be defined as follows:
\begin{align*}
I_1 &= I(X[1];Y)\\
I_2 &= I(X[2];Y,X[1])\\
I_3 &= I(X[3];Y,X[1,2]).
\end{align*}
The dominant face $\cD(W)$ is the hexagon whose vertices are specified in Fig.~\ref{face}. These vertices are actually the corner points of the uniform rate region. For simplicity, it is assumed that $W$ is symmetric with respect to the inputs, i.e., $W(y|x[1],x[2],x[3])$ remains the same if $(x[1],x[2],x[3])$ is permuted.

\begin{figure}[h]
\centering
\includegraphics[width=\linewidth]{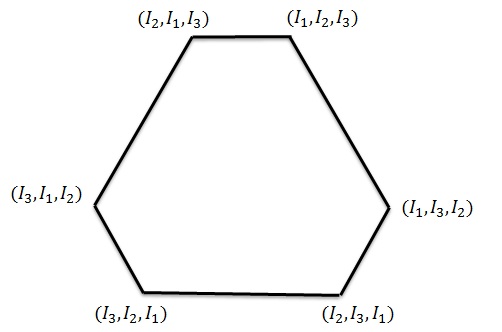}
\caption{The dominant face $\cD(W)$.}
\label{face}
\end{figure} 

The corner points are achievable by polar coding separately across the users. For instance, in order to achieve the point $(I_1,I_2,I_3)$, the message of user $1$ is decoded first. Then the message of user $2$ is decoded assuming the user $1$'s message is known. At the end, the message of user $3$ is decoded, assuming the first and second messages are known. In this case, the scheme does not depend on the underlying polar codes and any other capacity achieving code will fit as well. 


We consider polarization bases of length $2$. An example of a polarization base of length $2$ is shown in Fig.~\ref{building3}. The decoding order of this polarization base is given by $x_1[1],x_1[2],x_1[3], x_2[1], x_2[2], x_2[3]$. This decoding order can be simply denoted by a permutation $\pi$ which permutes the default ordered sequence $x_1[1],x_2[1],x_1[2], x_2[2], x_3[1], x_3[2]$. In this case, $\pi(1) = 1, \pi(2)=4, \pi(3)=2, \pi(4)=5, \pi(5) = 3, \pi(6) = 6$. For polar coding with a general block length $N$ built upon this polarization base, the decoding order is specified as follows. The successive cancellation decoder decodes the first half of user $1$'s message first, then the first half of user $2$'s message and then the first half of user $3$'s message followed by the second half of the messages of user $1$, $2$ and $3$, respectively.

\begin{figure}[h]
\centering
\includegraphics[width=\linewidth]{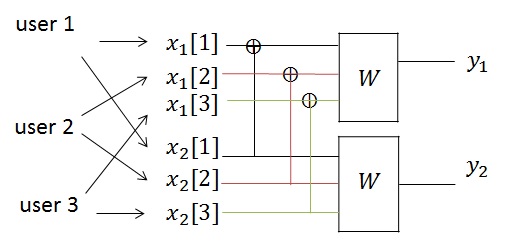}
\caption{An example for a polarization base of length $2$.}
\label{building3}
\end{figure} 

In total, there are $6! = 720$ possible polarization bases of length $2$. It is worth to emphasize that our scheme does not require a \emph{monotone} decoding order, in the sense defined in~\cite{Arikan3}, for each of the users, i.e. $x_2[j]$ may appear before $x_1[j]$ in the permuted sequence. For each polarization base, associated with a permutation $\pi$, one can derive the input-to-output mutual information for each of the users that constitute the $3$-tuple of rates $\bR^{(\pi)} \in \R^3$. This is done for a binary-additive Gaussian noise channel $W$. In this model, the input bits $x[1], x[2], x[3]  \in \left\{0,1\right\}$ are modulated using BPSK ($0$ is mapped to $-1$ and $1$ is mapped to $+1$) into $\overline{x}[1]$, $\overline{x}[2]$ and $\overline{x}[3]$, respectively. The output of the channel is denoted by $y$, where $y=\overline{x}[1]+\overline{x}[2]+\overline{x}[3]+\cN$ and $\cN$ is the Gaussian noise of variance $1$. For this channel, it is well-known that the capacity region is the same as the uniform rate region (see for example \cite[Chapter 5.5.]{modern}). This is given by the set of all possible $3$-tuple rates $(R_1,R_2,R_3)$ that satisfy
\begin{align*}
0 \leq R_1,R_2,R_3 &\leq I_3  = 0.7215\\
R_1+R_2,R_1+R_3,R_2+R_3 &\leq I_3+I_2 = 1.1106\\
R_1+R_2+R_3  &\leq I_1+I_2+I_3 =  1.3681
\end{align*}

We have numerically computed all the $720$ possible points $\bR^{(\pi)}$ for all the possible permutations $\pi$ over $\left\{1,2,\dots,6\right\}$. However, some the permutations results in the same point $\bR^{(\pi)}$. We have identified $474$ distinct points on the dominant face which are shown in Fig.~\ref{region2}. We will prove in Section\,\ref{sec:six} that these points are achievable with MAC polar coding built upon the polarization bases of length $2$, as the code block length grows large. 

\begin{figure}[h]
\centering
\includegraphics[width=\linewidth]{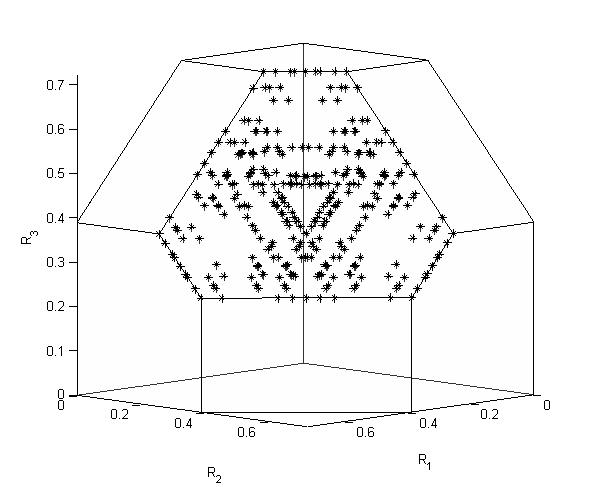}
\caption{The uniform rate region and the achievable points with polarization bases of length $2$.}
\label{region2}
\end{figure} 

\section{MAC Polar Codes: Encoding and Decoding}
\label{sec:four}

In this section, we discuss the construction of MAC polar codes, the encoding process and the multi-user successive cancellation decoding along with a low-complexity implementation using our proposed scheme. The MAC polar code construction is done numerically for a $2$-user case and the achievable rates are compared with that of a conventional time-sharing approach. 

\subsection{Construction and Encoding}
\label{sec:construction}

The construction of MAC polar codes follow a similar logic to that of the single-user polar code construction \cite{Arikan}. For each user, some of the input bits will carry information bits while the rest of input bits will be assigned a priori known frozen values. The choice of indices which will be assigned information bits will depend on the bit-channels of the MAC polar transformation defined in the previous subsection.

For a given threshold $\cT$, where $0 < \cT < 1$, the set of good bit channels, that will carry the information bits, constitutes of the bit-channels with a Bhattacharyya parameter less than $\cT$. Similar to the single-user polar codes, the parameter $\cT$ can be used to derive an upper bound on the error probability of the decoder. Since the error probability of bit-channels is upper bounded by their Bhattacharyya parameters, the total error probability is then upper bounded by $mN\cT$, using the union bound. The set of good bit-channels for the $j$-th user is denoted by $\cG_N^{(j)}(W,\cT)$ and is defined as follows:
\begin{equation}
\label{good-def}
\cG_N^{(j)}(W,\cT)\ {\deff}\ \left\{\, i \in s_{\pi}[j]~:~ Z(W^{(i)}_{N}) < \cT \hspace{1pt}\right\},
\end{equation}
where  $s_{\pi}[j]$ denotes the set of positions of the input bits of the $j$-th user in the ordered sequence specified by $\pi$, i.e.,
\begin{equation}
\label{S-def}
s_{\pi}[j]\,\deff\, \left\{\pi(i) ~:~ N(j-1)+1 \leq i \leq Nj \right\}.
\end{equation}

Then, the MAC polar code construction proceeds as follows. For any $j \in [\![m]\!]$ and $i \in [\![N]\!]$, if $(j-1)N+i \in \cG_N^{(j)}(W,\cT)$, then $u_i[j]$ is assigned an information bit for $j$-th user. Otherwise, $u_i[j]$ is frozen to a fixed value known to both the encoder and the decoder. If the underlying MAC is symmetric for any of the users, then the frozen values can be set to any arbitrary value, e.g., all zeros. Otherwise, the frozen values are chosen independently at random and revealed to the decoder a priori. In that case, the probability of error will be calculated as the average over all the possible information vectors as well as the frozen values. In the numerical examples provided in this paper, the underlying channels are symmetric and hence, the frozen values are set to zeros. 

The encoding for the MAC polar scheme is similar to that of original polar codes for each of the users. In fact, the encoder for any block length $N$ is fixed, where each user multiplies the resulting vector $u_1^N[j]$ by the polarization matrix $\Gn$ regardless of the choice of the permutation $\pi$ and the threshold $\cT$ . Then, the individual code rate of the $j$-th user is given by $\frac{|\cG_N^{(j)}(W,\cT)|}{N}$.

\subsection{Multi-user Successive Cancellation Decoding}
\label{sec:decoding}

In this subsection, the multi-user successive cancellation decoding of the MAC polar codes is discussed. We then discuss how a low complexity implementation of the decoder is possible when the MAC polar transformation is built upon a MAC polarization base. 

The successive cancellation decoding of single-user polar codes \cite{Arikan}, is invoked to decode the MAC polar codes as well. The permuted sequence $d_1^{mN}$ is decoded successively regardless of which user the bit $d_i$ belongs to. The decoder attempts to estimate the $i$-th bit $d_i$ having 
observed $y_1^n$ and estimated $d_1^{i-1}$. If $i \notin \cG_N^{(j)}(W,\cT)$, for $j \in [\![m]\!]$, then $d_i$ is set the corresponding frozen value which has been revealed to the decoder a priori. Otherwise, the optimal decision rule for the decoder is to decide $d_i = 0$ if 
\be{decision-rule}
W^{(i)}_{N}\bigl( y_1^N,d_1^{i-1} | 0) 
\, \ge \,
W^{(i)}_{N}\bigl( y_1^N,d_1^{i-1} | 1), 
\ee
and $d_i = 1$ otherwise. This hard-decision rule is invoked successively for all $i = 1,2,\dots,mN$. 

The decoder is aware of the permutation $\pi$ and the sets of good bit-channels $\cG_N^{(j)}(W,\cT)$. Hence, after successive decoding of $d_1^{mN}$, the multi-user decoder can recover the individual message transmitted by each of the users.  

Next we explain a low complexity implementation of the multi-user successive cancellation decoder assuming our proposed scheme of MAC polar transformation built upon a MAC polarization base of a fixed length $L$. To this end we consider MAC polar transformation of length $N$ associated with $\pi^{(N/L)}$, as defined in Section \ref{Sec:Not}. Consider a log likelihood ratio (LLR) calculator sub-block $\cS_1$ associated with the MAC polarization base corresponding to the permutation $\pi$ that successively calculates the soft information as in \eq{decision-rule} given $L$ channel outputs. This can be done using a naive way with complexity $O(2^{mL})$ for each LLR. Note that this is the worst-case complexity in the sense that depending on the choice of the particular base one can invoke the recursive structure, imposed by $G^{\otimes l}$ for each user, to compute the LLRs more efficiently. Also, consider a sub-block $\cS_2$ that implements a successive cancellation decoder trellis of size $(N/L)\times (n-l+1)$ with $N/L$ inputs and $N/L$ outputs, as originally proposed in~\cite{Arikan}.

The low-complexity multi-user successive cancellation decoder is a concatenation of two major blocks. The first block constitutes of $N/L$ parallel sub-blocks $\cS_1$ and the second block constitutes of $mL$ sub-blocks $\cS_2$. The $N$ channel observations are the input to the first block. The $N/L$ sub-blocks $\cS_1$ are processed in parallel, with $L$ channel observations each, and the first calculated LLRs of $\cS_1$'s are fed into the first $\cS_2$. After the process of the first $\cS_2$ is finished, the output hard decisions are used for each of the $\cS_1$'s to proceed the successive cancellation decoding and so on. This recursive structure follows from the particular choice of the permutation $\pi^{(N/L)}$ that is a lifted version of $\pi$ and it will be shown more explicitly through the recursive MAC channel polarization in Section\,\ref{sec:six}. The total decoding complexity is then upper bounded by $O(mN(n-l+1+2^{mL}))$, where $m$ and $L$ are regarded as fixed parameters in the scheme. If $L$ is fixed and $N$ grows large, the decoding complexity is asymptotically $O(N \log N)$, similar to the original polar decoding.

\subsection{Code Construction for a $2$-User MAC}

In this subsection, a binary-additive two-user Gaussian channel $W$ is picked for the simulation model similar to the model in the previous section. The Gaussian noise has variance $1$. We then consider $6$ different MAC polar transformations with polarization bases of length $2$. The multi-user successive cancellation decoding is implemented for each of MAC polar coding schemes in order to estimate the probability of error of the bit-channels. Then the code is constructed assuming the total frame error rate of the multi-user decoder to be $10^{-2}$. We construct MAC polar codes for the $6$ different cases with block length $N=1024$ and $N=4098$. In order to show the advantage of our MAC polar coding scheme in comparison with a time-sharing method, time sharing between the two corner points, to get their midpoint, with underlying point-to-point polar codes are considered. In this case, the total code block length for each of the users is the same $N = 1024, 4098$ which constitutes of two separate codes between which the time-sharing is done. The code rates are calculated assuming the frame error rate $10^{-2}$. The resulting rates are shown in Fig.~\ref{plot1}. One can observe that our proposed MAC polar coding scheme offers improved rate comparing to the time-sharing method at the same total code block length and frame error rate. Note that this improvement comes at the cost of additional complexity of calculating the likelihood ratios across the polarization base in the decoder which does not exist for the time-sharing method.  

\begin{figure}[h]
\centering
%

\begin{tikzpicture}

\definecolor{mycolor1}{rgb}{0,1,1}

\begin{axis}[
scale only axis,
width = 2.75in,
height = 2in,
xmin=0, xmax=1,
ymin=0, ymax=1,
xlabel={$R_1$},
ylabel={$R_2$},
legend entries={{\scriptsize $N=2^{10}$ MAC-polar},{\scriptsize $N=2^{10}$ time-sharing}, {\scriptsize $N=2^{12}$ MAC-polar},{\scriptsize $N=2^{12}$ time-sharing},{\scriptsize Capacity region}},
axis on top]

\addplot [
color=red,
dashed,
mark=+,
mark options={solid}
]
coordinates{
 (0,0.5352)
 (0.2217,0.5352)
 (0.2617,0.5098)
 (0.3682,0.4033)
 (0.4033,0.3682)
 (0.5098,0.2617)
 (0.5352,0.2217)
 (0.5352,0)

};

\addplot [
color=green,
dashed,
mark=o,
mark options={solid}
]
coordinates{
 (0,0.4961)
 (0.1934,0.4961)
 (0.3448,0.3448)
 (0.4961,0.1934)
 (0.4961,0)

};

\addplot [
color=blue,
dashed,
mark=asterisk,
mark options={solid}
]
coordinates{
 (0,0.5771)
 (0.2585,0.5771)
 (0.2993,0.5469)
 (0.4133,0.4329)
 (0.4329,0.4133)
 (0.5469,0.2993)
 (0.5771,0.2585)
 (0.5771,0)

};

\addplot [
color=mycolor1,
dashed,
mark=x,
mark options={solid}
]
coordinates{
 (0,0.5479)
 (0.2344,0.5479)
 (0.3912, 0.3912)
 (0.5479,0.2344)
 (0.5479,0)

};

\addplot [
color=black,
solid
]
coordinates{
 (0,0.7215)
 (0.3888,0.7215)
 (0.7215,0.3888)
 (0.7215,0)

};

\end{axis}

\end{tikzpicture}
\caption{Rates of the constructed MAC polar codes in comparison with time-sharing}
\label{plot1}
\end{figure}
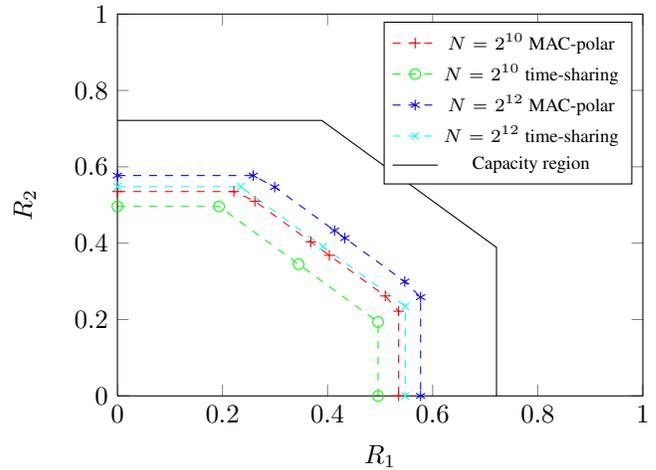 

Furthermore, as discussed in \cite{CISS}, one can invoke the compound polar codes proposed in \cite{compound} in order to improve the finite length performance if time sharing is used.


\section{Achievable Points on the Dominant Face}
\label{sec:five}

In this section, we consider MAC polarization bases of a fixed length $L$ and characterize the input-to-output mutual information for each of the users, also referred to as the allocated rates. Then, it is shown that $m$-tuples of the allocated rates cover the dominant face of the uniform rate region with a resolution characterized by the number of users $m$ and the length $L$. In particular, we prove that as $L$ grows large the entire dominant face is covered with fine enough resolution. 

Let $W$ be a given $m$-user binary-input discrete multiple access channel. Let $L = 2^l$, where $l$ is a positive integer, and consider MAC polarization bases of length $L$ with all possible permutations $\pi$ on the set $[\![mL]\!]$. The transformation length $L$ is fixed through the rest of this section and hence, a permutation is always assumed to be on $[\![mL]\!]$. 

Let $R_j^{(\pi)}$, as defined in \eq{Rj-dfn}, denote the input-to-output mutual information of $j$-th user normalized by the transformation length $L$ and let $\bR^{(\pi)} = (R_1^{(\pi)},R_2^{(\pi)},\dots,R_m^{(\pi)})$. The covering radius $r$, with respect to the transformation length $L$, is formally defined as follows:
\be{covering-radius}
r \,\deff\, \max_{\bQ \in \cD(W)} \min_{\pi} \left\| \bQ - \bR^{(\pi)}\right\|
\ee
where $\left\|.\right\|$ is the Euclidean norm in the $m$-dimensional space $\R^m$. One can think of the covering radius as the smallest $r$ such that the closed spheres of radius $r$ centered at all $m$-tuple rates $\bR^{(\pi)}$ cover the entire dominant face. In other words, for any point $\bQ$ on the dominant face, there exists a permutation $\pi$ such that the distance between $\bR^{(\pi)}$ and $\bQ$ is less than $r$. We think of the covering radius as a measure to characterize the resolution of the points $\bR^{(\pi)}$ on the dominant face.

For two users $j_1$ and $j_2$, we write $j_1 \rightarrow j_2$ if an input bit of the user $j_1$ appears right before an input bit of the user $j_2$ through the permutation $\pi$. More precisely, there exist $i_1$ and $i_2$ with $(j_1-1)L+1 \leq i_1 \leq j_1 L$ and $(j_2-1)L+1 \leq i_2 \leq j_2 L$ such that $\pi(i_2) = \pi(i_1)+1$. In that case, we also define the new permutation $\pi'$ from $\pi$ by swapping $\pi(i_1)$ and $\pi(i_2)$, i.e., $\pi'(i_2) = \pi(i_1)$, $\pi'(i_1) = \pi(i_2)$ and $\pi'(i) = \pi(i)$ for $i \neq i_1,i_2$. We say that $\pi'$ is the transposition of $\pi$ with respect to $j_1 \rightarrow j_2$. Notice that the transposition with respect to $j_1 \rightarrow j_2$ is not necessarily unique, as there may be other input bits of the user $j_1$ that appear right before other input bits of the user $j_2$. In that case, we may choose one of them for the transposition.

The next two lemmas are useful to establish the proof of \Tref{cover} which is the main theorem of this section.
\begin{lemma}
\label{resolution1}
Let $\pi'$ be the transposition of $\pi$ with respect to $j_1 \rightarrow j_2$. Then,
$$
R_j^{(\pi)} = R_j^{(\pi')}\ \text{for}\ j \neq j_1,j_2 
$$
and
$$
0 \leq R_{j_1}^{(\pi')} - R_{j_1}^{(\pi)} = R_{j_2}^{(\pi)} - R_{j_2}^{(\pi')} \leq \frac{1}{L}.
$$
\end{lemma}

The proof can be found in Appendix A.

Fix an arbitrary point $\bQ = (Q_1,Q_2,\dots,Q_m)$ on the dominant face $\cD(W)$. For a given permutation $\pi$, let $\AQ \subset [\![m]\!]$ be defined as follows:
\be{AQ-def}
\AQ \,\deff\, \left\{j\in [\![m]\!]: R^{(\pi)}_j < Q_j\right\}
\ee
\begin{lemma}
\label{resolution2}
For any $\cB \subset \AQ$, at least one input bit from the complement set $\cB^c$ appears after some input bits of the set $\cB$ through the permutation $\pi$. 
\end{lemma}
\begin{proof}
Assume, to the contrary, that all input bits of the users in the set $\cB^c$ appear before all the input bits of the users in the set $\cB$. This, together with the chain rule of mutual information, and the definition of $R_j^{(\pi)}$ in \eq{Rj-dfn} imply that
\be{resolution2-1}
\sum_{j \in \cB} R_j^{(\pi)} = I(U_1^L[\cB];Y,U_1^L[\cB^c]). 
\ee 
Notice that $\bQ$ is a point included in $\cU(W)$. Therefore, using \eq{region-def} and \eq{resolution2-1} we get
$$
\sum_{j \in \cB} Q_j \leq I(U_1^L[\cB];Y_1^L, U_1^L[\cB^c]) = \sum_{j \in \cB} R_j^{(\pi)}.
$$
But since $\cB \subset \AQ$, for any $j \in \cB$, $Q_j > R_j^{(\pi)}$ which is a contradiction. This proves the lemma. 
\end{proof}

For two users $j$ and $j'$, we say that $j'$ is reachable from $j$, if there exists a sequence of users $j_1,j_2,\dots,j_t$, with $j_i \in [\![m]\!]$, such that $j \rightarrow j_0 \rightarrow j_1 \rightarrow \dots \rightarrow j_t \rightarrow j'$. This sequence is also referred to as the path from $j$ to $j'$. If such a path exists, then one can assume, without loss of generality, that $j_1,j_2,\dots,j_t$ are distinct. Clearly, if $j''$ is reachable from $j'$ and $j'$ is reachable from $j$, then $j''$ is reachable from $j$. 

\begin{corollary}
\label{c-path}
For any $j \in \AQ$, at least one element of the complement set $[\![m]\!] - \AQ$ is reachable from $j$. 
\end{corollary}
\begin{proof}
Let $\cB$ be the set of all reachable users from $j$. Then $\cB$ is a closed set in the sense that no element in $\cB^c$ is reachable from any element of $\cB$. It implies that through the permutation $\pi$, all the input bits of $\cB$ appear after all the input bits of $\cB^c$. Then, by \Lref{resolution2}, $\cB$ is not a subset of $\AQ$ which proves the corollary.   
\end{proof}

Now that we have established the necessary notations and derived desired properties of the $m$-tuple rates of the polarization bases, we turn to state the main theorem of this section as follows: 
\begin{theorem}
\label{cover}
Given a point $\bQ \in \cD(W)$ and positive integer $L$, there exists a permutation $\pi$ on the set $[\![mL]\!]$ such that for any index $j \in [\![m]\!]$, we have
$$
|Q_j - R^{(\pi)}_j| \leq \frac{m-1}{L}.
$$
\end{theorem}
\begin{proof}
Let $\pi$ be the permutation that minimizes the Euclidean distance $\left\|\bR^{(\pi)}-\bQ\right\|$ among all the permutations, i.e., for any other permutation $\pi'$:
$$
\left\|\bR^{(\pi)}-\bQ\right\| \leq \left\|\bR^{(\pi')}-\bQ\right\|.
$$
Consider $\AQ$ as defined in \eq{AQ-def}. If $\AQ$ is empty, then it implies that for any $j \in [\![m]\!]$, $Q_j \leq R^{(\pi)}_j$. But both $\bR^{(\pi)}$ and $\bQ$ have the same sum-rate, i.e.,
$$
\sum_{j \in [\![m]\!]} R^{(\pi)}_j = \sum_{j \in [\![m]\!]} Q_j = \cI(W).
$$
Therefore, $\bQ = \bR^{(\pi)}$ and the theorem is proved. In the case that $\AQ$ is non-empty, let $j_1$ be an arbitrary element of $\AQ$. By \Cref{c-path}, there is a path $j_1 \rightarrow j_2 \rightarrow \dots \rightarrow j_t$ with $j_i \in \AQ$, for $i = 2,\dots,t-1$, and $j_t \notin \AQ$. Without loss of generality, we can assume that $j_i$'s are distinct and hence $t \leq m$. For notational convenience, the users are relabeled so that the user $j_i$ is labeled with $i$ for $i \in [\![t]\!]$. 

Define the sequence of real numbers $\left\{a_i\right\}_{i=1}^{t}$, where $a_i = Q_i - R^{(\pi)}_{i}$. We claim that $a_{i-1} - a_i \leq \frac{1}{L}$. The claim will imply that $a_1 - a_t \leq \frac{t-1}{L}$, and since $a_t$ is not a positive number as $t \notin \AQ$, it follows that 
$$
a_1  \leq a_1 - a_t \leq \frac{t-1}{L} \leq \frac{m-1}{L}.
$$

By symmetry, the same set of arguments can be applied to the set $[\![m]\!] - \AQ$, where one can prove that $R^{(\pi)}_{j} - Q_j \leq \frac{m-1}{L}$, for $j \in [\![m]\!] - \AQ$. This will complete the proof of theorem. Hence, it remains to prove the claim about the sequence $\left\{a_i\right\}_{i=1}^{t}$.  

For $i = 2,3,\dots,t$, let the permutation $\pi_i$ be the transposition of $\pi$ with respect to $i-1 \rightarrow i$. 
By \Lref{resolution1}, we have
\begin{align}
\label{resolution3-1}
0 \leq R_{{i-1}}^{(\pi_i)} - R_{i-1}^{(\pi)} &= R_{i}^{(\pi)} - R_{i}^{(\pi_i)} \leq \frac{1}{L}\\
\label{resolution3-2}
R_{j}^{(\pi)} &= R_{j}^{(\pi_i)}\ \text{for}\ {j \neq i-1,i}.
\end{align}
On the other hand, by the choice of $\pi$, we know that
\be{resolution3-3}
\left\|\bR^{(\pi)} - \bQ\right\| \leq \left\|\bR^{(\pi_i)} - \bQ\right\|.
\ee
\eq{resolution3-2} and \eq{resolution3-3} together imply that
\be{resolution3-4}
\begin{split}
(Q_{i-1} -  R_{i-1}^{(\pi)})^2 &+ (Q_i -  R_{i}^{(\pi)})^2 \\
&\leq (Q_{i-1} -  R_{i-1}^{(\pi_i)})^2 + (Q_i -  R_{i}^{(\pi_i)})^2.
\end{split}
\ee
Let $\alpha = R_{{i-1}}^{(\pi_i)} - R_{i-1}^{(\pi)}$, then by \eq{resolution3-1}, \eq{resolution3-4} can be re-written as
$$
a_{i-1}^2 + a_i^2 \leq (a_{i-1} - \alpha)^2 + (a_{i}+\alpha)^2
$$
which can be simplified as
$$
a_{i-1} - a_i \leq \alpha.
$$
By \eq{resolution3-1}, $\alpha \leq \frac{1}{L}$ which completes the proof of claim.
\end{proof}

\begin{corollary}
\label{cover2}
The covering radius associated with the MAC polarization bases of length $L$ is upper bounded by $\frac{(m-1)\sqrt{m}}{L}$.
\end{corollary}
\begin{proof}
Consider an arbitrary point $\bQ$ on the dominant face. Then the permutation $\pi$ exists as in \Tref{cover}, and 
$$
\left\|\bQ - \bR^{(\pi)}\right\|^2 = \sum_{j = 1}^{m} (Q_j - R_j^{(\pi)})^2 \leq \frac{m(m-1)^2}{L^2}.
$$ 
Then, the corollary follows  by the definition of the covering radius in \eq{covering-radius}.
\end{proof}

\noindent
{\bf Remark.} \looseness=-1 
In Theorem 5 of \cite{CISS}, we showed for a two-user MAC that for any point $\bQ$ on the dominant face there exists a pair of achievable rates with a distance at most $\frac{\sqrt{2}}{L}$ from $\bQ$. In other words, the covering radius for the special case of $m=2$ is upper bounded by $\frac{\sqrt{2}}{L}$. This result matches with the result of \Cref{cover2} for $m = 2$.

\section{MAC Polarization and Asymptotic Performance}
\label{sec:six}

In this section, we show that, for all $L=2^l$ and permutations $\pi$, the points $\rp$ defined in \eq{Rj-dfn} are achievable by MAC polar coding. Then the asymptotic performance of the proposed MAC polar codes are discussed.

\subsection{MAC Polarization}

Consider a MAC polarization base of length $L=2^l$ associated with a $\pi$ on $[\![mL]\!]$.  Let $n \geq l$ be a positive integer and $N = 2^n$. We show in \Lref{channel-split-general} that the MAC polar transformations of length $N$ built upon the MAC polarization base of length $L$
can be decomposed into single-user polar transformations, with length $N/L$, of the bit-channels corresponding to the polarization base. In this case, we say that the MAC polar transformation is built upon the polarization base by $n-l$ levels of polarization.

Let $(U_1^N[1],U_1^N[2],\dots,U_1^N[m])$ be the input sequences of a MAC polar transformation associated with $\pi^{(N/L)}$, as defined in Section \ref{Sec:Not}. Let also $Y_1^N$ denote the output sequence.


The following lemma holds for the particular MAC polar transformation defined here and its corresponding bit-channels. This lemma is proved in Appendix A.

\begin{lemma}
\label{channel-split-general}
Given $N = 2^n \geq L$ independent copies of $W$ and the bit-channels defined in \eq{Wi-def-general}, for any $j$ with $1 \leq j \leq N$,
$$
W_{2N}^{(2j-1)} = W_N^{(j)} \boxcoasterisk W_N^{(j)}
$$
and
$$
W_{2N}^{(2j)} = W_N^{(j)} \circledast W_N^{(j)}.
$$
\end{lemma}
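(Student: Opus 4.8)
The plan is to replay, essentially verbatim, the computation in the proof of \Lref{channel-split}, which is itself the MAC analogue of Arikan's channel-combining recursion (Proposition~3 of \cite{Arikan}). The conceptual reason the recursion survives in this generality is that the transform $G^{\otimes n}$ is applied to the two users' input blocks $U_1^N$ and $V_1^N$ \emph{separately}, so the outermost level of polarization acts only on the inputs and never mixes with the MAC combining performed by the physical channel $W$. Peeling off this outermost level therefore reproduces the same $\boxcoasterisk$ and $\circledast$ structure as in the single-user case, independently of $l$ and of the chosen decoding order $P^{(i)}$.

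First I would record the one-step factorization of the combined channel. Writing $d_{1,o}^{2N}, d_{1,e}^{2N}$ for the odd- and even-indexed subvectors of $d_1^{2N}$ and $y_{1,o}^N, y_{1,e}^N$ for the corresponding halves of the output, the recursive form of $G^{\otimes n}$ applied to each user gives the factorization
\[
\widetilde{W}_{2N}\bigl(y_1^N \bigm| d_1^{2N}\bigr) = \widetilde{W}_N\bigl(y_{1,o}^N \bigm| d_{1,o}^{2N} \oplus d_{1,e}^{2N}\bigr)\,\widetilde{W}_N\bigl(y_{1,e}^N \bigm| d_{1,e}^{2N}\bigr),
\]
exactly as in \Lref{channel-split}. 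This identity is the only place where the structure of the building block enters, and it holds here word for word because the Kronecker recursion does not depend on $l$.

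Next I would substitute this factorization into the bit-channel definition \eq{Wi-def-general}. For the even index $2j$, the marginalizing sum over $d_{2j+1}^{2N}$ splits along the two factors; each inner sum is, after restoring the normalization, precisely the half-length bit-channel $W_N^{(j)}$, and comparison with \eq{operation2} gives $W_{2N}^{(2j)} = W_N^{(j)} \circledast W_N^{(j)}$. For the odd index $2j-1$ the same manipulation leaves one extra summation, over $d_{2j}$, acting on the XORed argument; performing it reproduces the $\tfrac{1}{2}\sum_{x}$ of \eq{operation1} and yields $W_{2N}^{(2j-1)} = W_N^{(j)} \boxcoasterisk W_N^{(j)}$. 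These are the routine rearrangements already displayed for the even index in the proof of \Lref{channel-split}.

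The step I expect to be the main obstacle is verifying that the splitting of $d_1^{2N}$ used in the factorization is compatible with the decoding order $P^{(i)}$ at \emph{both} block lengths at once: one must check that the already-decoded prefix $d_1^{j-1}$ and the treated-as-noise suffix $d_{j+1}^{2N}$ map onto the corresponding prefixes and suffixes of the two half-length decoding vectors, so that the inner sums genuinely reproduce $W_N^{(j)}$ with the \emph{same} induced order rather than a permuted version of it. Since $d_1^{2N} = (v_1^{N(i-1)/L},u_1^N,v_{N(i-1)/L+1}^N)$ is a nontrivial permutation of the interleaved user bits rather than a plain interleaving, this requires tracking the order carefully; but because that permutation is obtained by applying $n-l$ \emph{uniform} polarization levels to the single fixed building-block order $P^{(i)}$, the prefix/suffix alignment is preserved level by level, and once this is confirmed the argument collapses to the $l=1$ case of \Lref{channel-split}.
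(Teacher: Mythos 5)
Your proposal is correct and follows essentially the same route as the paper, which simply states that the proof is similar to that of \Lref{channel-split}; you replay that computation (one-step factorization of $\widetilde{W}_{2N}$ via the Kronecker recursion, then splitting the marginalizing sum to recover the $\boxcoasterisk$ and $\circledast$ operations). Your added check that the odd/even split of $d_1^{2N}=(v_1^{N(i-1)/L},u_1^N,v_{N(i-1)/L+1}^N)$ reproduces the half-length order $P^{(i)}$ (which works because $N(i-1)/L$ and $N$ are both even for $n>l$) is exactly the detail the paper leaves implicit, and it goes through.
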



\begin{corollary}
\label{recursion-split-general}
For a given $k \in \left[mL\right]$, let $\overline{W}$ denote the single user channel $W_{L}^{(k)}$. Let also $n \geq l$ and $N=2^n$. Then, for any $\frac{(k-1)N}{L}+1 \leq j \leq \frac{kN}{L}$, 
$$
W_{N}^{(j)} = \overline{W}_{\frac{N}{L}}^{\left(j-\frac{(k-1)N}{L}\right)},
$$
where the bit-channels with respect to $\overline{W}$ are the single user bit-channels. 
\end{corollary}
\begin{proof}
The proof is by induction on $n$. The base of induction is trivial for $n=1$. The induction steps are by the fact that channel combining recursion steps in Arikan's polar transformation, as proved in Proposition 3 of \cite{Arikan}, match with \Lref{channel-split-general}.   
\end{proof}

Let $\cT = 2^{-N^{\beta}}\!\!/mN$, where $\beta \,{<}\, \shalf$, and consider the set of good bit-channels $\cG^{(j)}_N(W, \cT)$, for any $j \in [\![m]\!]$, as defined in \eq{good-def}. The next theorem establishes the main result of the channel polarization for the MAC polar transformation.
\begin{theorem}
\label{thm2}
For any $m$-user binary-input discrete MAC $W$ and $j \in [\![m]\!]$, we have
$$
\lim_{N \to \infty} \frac{\bigl|\cG^{(j)}_N(W, \cT)\bigr|}{N} = R_j^{(\pi)}.
$$
\end{theorem}
The proof can be found in Appendix A.

\subsection{Asymptotic Performance}

The code construction and encoding process are done with respect to the set of good bit-channels $\cG^{(j)}_N(W, \cT)$, as explained in Section\,\ref{sec:construction}. The multi-user successive cancellation decoder is also done, as explained in Section\,\ref{sec:decoding}, using a low complex implementation that captures the essence of bit-channel recursions established in \Lref{channel-split-general} and \Cref{recursion-split-general}. 


For the asymptotic performance of the proposed MAC polar coding scheme, the following theorem follows, similar to \cite{Arikan} and \cite{AT}.
\begin{theorem}
\label{thm-main}
For any $\beta \,{<}\, \shalf$, any $m$-user MAC $W$, any $\epsilon > 0$ and any point $\bQ$ on $\cD(W)$, there exists a family of polar codes that approaches a point on the dominant face within distance $\epsilon$ from $\bQ$. Furthermore, the average probability of frame error under successive cancellation decoding is less than $2^{-N^{\beta}}$, where $N$ is the code block length for each of the users.
\end{theorem}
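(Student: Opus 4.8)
The plan is to combine three facts that are already in place for the $2L\times L$ construction: the resolution estimate of \Cref{resolution2}, the asymptotic rate formula of \Tref{thm2}, and the standard union bound on the frame error probability of successive cancellation decoding in terms of Bhattacharyya parameters. The essential structural point is to keep the two limits separate: the building-block size $L=2^l$ is chosen first, as a function of $\epsilon$ alone, and only afterward is the per-user block length $N=2^n$ sent to infinity.

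First I would fix the resolution. Given $\epsilon>0$, pick $l$ large enough that $L=2^l$ satisfies $\sqrt{2}/L<\epsilon$. Applying \Cref{resolution2} to the given point $Q=(Q_1,Q_2)$ on the dominant face produces an index $i\in[L+1]$ for which $|Q_1-R_i^{(1)}|\le 1/L$ and $|Q_2-R_i^{(2)}|\le 1/L$; hence the target point $(R_i^{(1)},R_i^{(2)})$ lies within Euclidean distance $\sqrt{2}/L<\epsilon$ of $Q$, and by \Tref{sum-rate} it lies on the dominant face. I would then fix this $L$ together with the decoding order $P^{(i)}$ of \eq{order-dfn} for the remainder of the argument.

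With $L$ and $i$ fixed, I would construct the code family and read off its rates. For each $N=2^n$ with $n\ge l$, apply $n-l$ recursion steps on top of the $2L\times L$ building block, form the bit-channels $W^{(j)}_{2N}$ as in \eq{Wi-def-general}, and place the information bits of the two users on $\cG^{(1)}_N(W,\beta)$ and $\cG^{(2)}_N(W,\beta)$ respectively, freezing the remaining positions to values revealed to the receiver. By \Tref{thm2} the code rates $|\cG^{(1)}_N(W,\beta)|/N$ and $|\cG^{(2)}_N(W,\beta)|/N$ converge to $R_i^{(1)}$ and $R_i^{(2)}$, so for all sufficiently large $N$ the achieved rate pair is itself within $\epsilon$ of $Q$. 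For the reliability guarantee, an error in the joint successive cancellation decoder occurs only when some information bit is mis-decoded given that all earlier bits in the order $P^{(i)}$ were correct; the union bound then bounds the frame error probability by the sum of $Z(W^{(j)}_{2N})$ over $j\in\cG^{(1)}_N(W,\beta)\cup\cG^{(2)}_N(W,\beta)$.

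The one place that needs care --- and the only real obstacle --- is the bookkeeping in this final bound. Each selected bit-channel satisfies $Z(W^{(j)}_{2N})<2^{-N^{\beta}}/N$ by the definition in \eq{good-def}, and there are at most $2N$ of them, so the sum is strictly below $2\cdot 2^{-N^{\beta}}$, which carries an unwanted factor of $2$. I would remove it by running the construction with a parameter $\beta_0$ chosen in the open interval $(\beta,\shalf)$: \Tref{thm2} yields the same limiting rates $R_i^{(1)},R_i^{(2)}$ for $\beta_0$, while the error bound becomes $2\cdot 2^{-N^{\beta_0}}$, and since $N^{\beta_0}-N^{\beta}\to\infty$ this is less than $2^{-N^{\beta}}$ for all large $N$. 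Combining the rate statement with this reliability statement produces a family of polar codes achieving a dominant-face point within $\epsilon$ of $Q$ with frame error probability below $2^{-N^{\beta}}$, as claimed.
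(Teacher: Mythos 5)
Your proposal is correct and follows essentially the same route as the paper's own proof: fix $L$ from $\epsilon$ via \Cref{resolution2}, locate the target point $(R_i^{(1)},R_i^{(2)})$ on the dominant face, construct the two codes on the good sets $\cG_N^{(1)}(W,\beta)$ and $\cG_N^{(2)}(W,\beta)$, and invoke \Tref{thm2} for the rates together with the Bhattacharyya union bound for reliability. The only difference is in the last step: the paper reads the $2^{-N^{\beta}}$ bound per user (each user's good set has at most $N$ indices, each contributing less than $2^{-N^{\beta}}/N$), whereas you bound the total frame error and absorb the resulting factor of $2$ by running the construction at a slightly larger exponent $\beta_0\in(\beta,\shalf)$ --- a harmless and slightly more careful refinement.
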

\begin{proof}
The choice of $L$ for the polarization base depends on $\epsilon$. Fix $L = 2^l$ such that $\frac{(m-1)\sqrt{m}}{L} < \epsilon$. Then, by \Cref{cover2} there exists $\pi$ such that the distance between $\bQ$ and $\rp$ is less than $\epsilon$. Fixing $L$ and $\pi$, for any block length $N = 2^n \geq L$, we construct the polar code for the $j$-th user with respect to the set of good bit-channels $\cG_N^{(j)}(W,\cT)$, where $\cT = 2^{-N^{\beta}}\!\!/mN$, defined in \eq{good-def}. The rest of bit-channels are set to fixed values selected according to independent and uniform distributions and then revealed to both the encoder and the decoder. Then, by \Tref{thm2}, the $m$-tuple of rates approach $\rp$ as $N$ goes to infinity. The probability of frame error is calculated as the average over all the possible information vectors as well as the frozen values and is bounded by the sum of the Bhattacharrya parameters of the selected good bit-channels. This follows similar to ~\cite[Proposition 2.]{Arikan}. Hence, by definition of $\cG_N^{(j)}(W,\cT)$, the average frame error probability is less than $2^{-N^{\beta}}$ for the multi-user successive cancellation decoding, as discussed in Section\,\ref{sec:decoding}. 
\end{proof}

\section{Related Works and Comparisons  }
\label{sec:seven}

\subsection{Comparison with Other MAC Polar Coding Schemes}

In a related work, the authors of \cite{STY,AT2} propose a framework for MAC polarization by extending the notion of channel splitting from the single user case to the two user case \cite{STY}, and then to the $m$-user case \cite{AT2}. Alternatively, our MAC polar coding scheme in this paper views the MAC polarization as a single user channel polarization by considering all the possible decoding orders. First, we compare the two schemes in terms of decoding complexity and then capacity achieving property.

Both our scheme and that of \cite{STY,AT2} use a successive cancellation decoding, originally proposed by Ar{\i}kan in \cite{Arikan}. As discussed in Section\,\ref{sec:decoding}, we have to combine the low-complex Ar{\i}kan's decoder with a basic decoder for the polarization base. As a result, we have an extra additive term $O(mN 2^{mL})$ in the complexity which is dominated by $O(mN (n-l+1))$, as $N$ goes to infinity, while $L$ and $m$ are assumed to be constant. On the other hand, in the scheme proposed in \cite{STY} and extended in \cite{AT2}, the likelihood of a vector of length $m$ needs to be tracked along the decoding trellis. Therefore, instead of a simple likelihood ratio, a vector of length $2^m$ for the probability of all $2^m$ possible cases has to be computed recursively. This increases the decoding complexity by a factor of $2^m$. In fact, the decoding complexity is still $O(N\log N)$, but if we look at the actual number of operations needed to complete the decoding, the decoding complexity of the scheme in \cite{STY,AT2} is $\frac{2^m}{m}$ times more than the decoding complexity of our scheme, asymptotically.  

The two schemes can be also compared in terms of capacity-achieving property. In \Tref{thm-main}, we proved that all the points on the dominant face of the uniform rate region can be achieved with arbitrary fine resolution. As pointed out before, the scheme proposed in \cite{STY,AT2} does not necessarily achieve the whole uniform rate region. It is only guaranteed that one point on the dominant face is achievable. There actually exist examples of two user MACs such that the scheme proposed in \cite{STY} achieves only one point on the dominant face. 

\subsection{Polar Coding for Interference Networks}

A polar coding scheme for interference networks is introduced in \cite{WS}. The authors of \cite{WS} develop a polar coding method that achieves the Han-Kobayashi inner bound for the two-user interference networks. An intermediate step in doing so is MAC polar coding for the two-user case where the dual of the Ar{\i}kan's scheme of \cite{Arikan3} is used in the context of multiple access channels. In order to extend the results to the general case of $m$-user $k$-receiver interference network, the intermediate step would be to generalize the scheme of \cite{Arikan3}, based on the monotone chain rule of mutual information, to $m$-user multiple access channels. This problem is called polar splitting in \cite{WS} and it is conjectured that an induction on the number of users $m$ would lead to such result. We briefly recap the induction procedure of \cite[Proposition 2.]{WS}. We further point out that in order to complete the induction one has to increase the input alphabet of the resulting MACs exponentially in each step of the induction. Although it is pointed out in the proof of~\cite[Proposition 2.]{WS} that the resulting sub-problems in each step of the induction are considered as MACs, it might be possible to relax this condition in order to overcome the complexity issue. To this end, the statement of the~\cite[Proposition 2.]{WS} needs to be also relaxed in order to cover more general scenarios than MACs and then the induction will follow without the need to exponentially increase the output alphabet sizes at each step of the induction. 


Consider $L$ uses of $W$, with inputs $U_1^L[j] \in \sX^L$, for $j \in [\![m]\!]$, a polar transformation of length $L$, and output $Y_1^L \in \sY^L$. Let also $(Q_1,Q_2,\dots,Q_m)$ denote the $m$-tuple of rates on the dominant face to be approximated. Let $i$ increases from $1$ to $L$ and consider quantities 
\be{WS-1}
\frac{1}{L} I(U_1^L[\cJ];Y_1^L,U_1^i[1]),
\ee
for each $\cJ \subseteq [\![m]\!]\setminus\left\{1\right\}$. As $i$ increases, each mutual information in \eq{WS-1} increases by at most $\frac{1}{L}$ in each step. Also, it is shown that there exists an $i$ such that for at least one $\cJ \subseteq  [\![m]\!]\setminus\left\{1\right\}$ the following is violated
\be{WS-2}
\frac{1}{L} I(U_1^L[\cJ];Y_1^L,U_1^i[1]) < Q(\cJ),
\ee
where $Q(\cJ) = \sum_{j \in \cJ} Q_j$. The first $i_0$ and $\cJ_0$ for which this happens are considered. Then the problem of approaching the rates on the dominant face, also referred to as the MAC rate approximation problem for $W$, is split into two separate problems with MACs $W_1$ and $W_2$ having smaller number of users but exponentially larger input and output alphabets, as elaborated below. 

Let $m_0 = |\cJ_0|$. The first MAC $W_1 : (\sX^L)^{m_0} \rightarrow \sY^L \times {\sX}^{i_0}$ is deduced from $L$ uses of $W$ having the input sequence $U_1^L[\cJ_0]$ and the output sequence $Y_1^L$ together with $U_1^{i_0}[1]$, while $U_{i_0+1}^{L}[1]$ and $U_1^L[\cJ_0^c \setminus\left\{1\right\}]$ are regarded as noise. The second MAC $W_2 : \sX^{L-i_0}\times ({\sX^L})^{L-m_0-1} \rightarrow \sY^L \times \sX^{i_0}\times ({\sX^L})^{m_0}$ is also deduced from $L$ uses of $W$ having the input sequence $U_{i_0+1}^L[1]$ and $U_1^L[\cJ_0^c \setminus\left\{1\right\}]$ and the output sequence $Y_1^L$, $U_1^{i_0}[1]$ and $U_1^L[\cJ_0]$. 


The goal of the induction procedure of \cite[Proposition 2.]{WS} is to satisfy the rate approximation requirements for the $m$ users one by one, i.e., in the first induction step the design rate $Q_1$ of the first user is approximated and the rest of the users are split into two groups of $\cJ_0$ and $\cJ_0^c \setminus\left\{1\right\}$ corresponding to MACs $W_1$ and $W_2$. The induction procedure of \cite[Proposition 2.]{WS} results in a super exponentially large $L$ at the end, in terms of the number of users, and a particular structure on the order of input bits. In contrast, we arrive at the solution for the rate approximation problem for all the users at once using a fixed $L$. To this end, we consider all the possible permutations of the input bits and find a sequence of permutation transpositions that results in the desired solution, as discussed in Section \ref{sec:five}, and provided a concrete proof. We further derived an upper bound on the covering radius in terms of the number of users $m$ and the length of the polarization base $L$. In a sense, this provides a precision measure for the rate approximation problem of~\cite{WS} which is not addressed in~\cite{WS}. 

\subsection{Rate Splitting Method}

An alternative method to achieve the uniform rate region of multiple access channels is by rate splitting between the users \cite{RU, GRU}. It is shown in \cite{GRU} that the encoding/decoding problem for any asynchronous $m$-user discrete memoryless MAC can be reduced to corresponding problems for at most $2m-1$ single-user channels. This is used in \cite[Appendix A]{STY}, where it is shown how polar coding can be adapted in this context. In particular, for achieving the uniform rate region, any target $m$-tuple of rates on the dominant face can be turned into a corner point of a $2m-1$-user MAC, where the input distributions of the $2m-1$ users are no longer uniform. However, binary polar codes can only be used for achieving the symmetric capacity of binary-input channels, where the input distribution is uniform. 

The solution suggested in \cite[Appendix A]{STY} to deal with the non-uniform distributions is to use Gallager's method \cite[p. 208]{gallager}, where a mapper from a larger alphabet $\sX'$ with uniform distribution to a smaller alphabet $\sX$ is used to approximate a non-uniform distribution on $\sX$. Then polar coding over a discrete channel with the input alphabet $\sX'$ is used to arrive at the solution. However, in order to achieve a finer resolution on the dominant face, the resulting alphabet $\sX'$ would become larger. In particular, to the best of our knowledge, a solution for MAC polar coding with rate splitting method, where binary polar coding, or at least polar coding over a channel with fixed alphabet, is used at the users does not still exist.

\section{Concluding Remarks}
\label{sec:eight}

In this paper, we considered the problem of designing polar codes for transmission over general $m$-user multiple access channels. The key observation behind our work is to view the polar transformations for the $m$-users across the multiple access channel as a unified MAC polar transformation. We showed that the MAC polar transformation of length $N$ with a certain decoding order can be split into $mL$ single user polar transformations, built upon a MAC polarization base of length $L$.  We also proved that the covering radius of the dominant face is upper bounded by $\frac{(m-1)\sqrt{m}}{L}$. Therefore, by letting $L$ to grow large, we are able to achieve the entire uniform rate region with the constructed MAC polar code. Moreover, our construction allows low complexity multi-user successive cancellation decoding with an asymptotic decoding complexity of $O(N \log N)$, similar to that of single-user polar decoding. 

From a theoretical point of view, the solution for achieving the uniform rate region with MAC polar coding presented in this paper has the advantage of being useful for other multi-user communication set-ups. In particular, it can be used as an alternative building block for polar coding scheme of \cite{WS} to achieve the Han-Kobayashi inner bound for the interference networks. 


Since each user's code can be regarded as single user polar code, existing methods for improving the performance of single user polar codes can be applied on top of our proposed MAC polar codes. For instance, the individual polar codes can be concatenated with other block codes \cite{RS-polar} or can be made systematic \cite{Arikan4}.  Furthermore, the MAC polar code construction can also be modified by relaxing the polarization of certain bit-channels as proposed in \cite{ElKh1503:Relaxed}  to reduce both the encoding and decoding  latencies and computational complexities. List decoding of polar codes \cite{TV} can aslo be used to boost the finite-length performance.

\bibliographystyle{IEEEtran}
\bibliography{polar2}

\begin{appendix}

\textit{Proof of \Lref{resolution1}:}
By definition, $\pi'$ is the result of $\pi$ by swapping $\pi(i_1)$ and $\pi(i_2)$, where $\pi(i_2) = \pi(i_1) +1$. Therefore, by definition of $I_i^{(\pi)}$ in \eq{Iji-dfn}, $I_i^{(\pi)} = I_i^{\pi'}$ for $i \neq i_1,i_2$. This implies that the values of $R_j^{\pi}$ do not change during the specified transposition, for $j \neq j_1,j_2$. 

For the second part, 
\begin{equation}
\begin{split}
\label{resolution-eq1}
R_{j_1}^{(\pi')} - R_{j_1}^{(\pi)} &= \frac{\sum_{i=(j_1-1)L+1}^{j_1 L} I_i^{(\pi')}}{L} - \frac{\sum_{i=(j_1-1)L+1}^{jL} I_i^{(\pi)}}{L} \\
& = \frac{I_{i_1}^{(\pi')} - I_{i_1}^{(\pi)}}{L}.
\end{split}
\end{equation}
Also,
\be{resolution-eq2}
0 \leq I_{i_1}^{(\pi)} \leq I_{i_1}^{(\pi')} \leq 1.
\ee
\eq{resolution-eq1} and \eq{resolution-eq2} together imply that
$$
0 \leq R_{j_1}^{(\pi')} - R_{j_1}^{(\pi)} \leq \frac{1}{L}.
$$
The other inequality also follows similarly or simply by observing that $\sum R_j^{(\pi)} = \sum R_j^{(\pi')} = \cI(W)$, by \Lref{sum-rate}.
\hspace*{\fill}~\QED\par\endtrivlist\unskip

\textit{Proof of \Lref{channel-split-general}:}
Let $d^{2N}_{1,o}$ and $d^{2N}_{1,e}$ denote the even-indexed and odd-indexed sub-sequences, respectively. Then
\begin{align*}
W_{2N}^{(2i)}  = & \frac{1}{2^{2mN-1}} \sum_{d_{2i+1}^{2mN}} \widetilde{W}_N\bigl(y^N_{1,o} | d^{2N}_{1,o} \oplus d^{2N}_{1,e} \bigr) 
\widetilde{W}_N\bigl(y^N_{1,e} | d^{2N}_{1,e} \bigr) \\
= & \frac{1}{2} \frac{1}{2^{mN-1}} \sum_{d_{2i+1,e}^{2N}} \widetilde{W}_N\bigl(y^N_{1,e} | d^{2N}_{1,e} \bigr). \\
& \frac{1}{2^{mN-1}} \sum_{d_{2i+1,o}^N} \widetilde{W}_N\bigl(y^N_{1,o} | d^{2N}_{1,o} \oplus d^{2N}_{1,e} \bigr) \\
= & W_N^{(i)} \circledast W_N^{(i)}
\end{align*} 
where we used the definitions of bit-channels in \eq{Wi-def-general} and the channel combining operation in \eq{channel-com1} and \eq{channel-com2} together with the recursive structure of polar transformation. The other equation can be derived using the similar arguments.
\hspace*{\fill}~\QED\par\endtrivlist\unskip

\textit{Proof of \Tref{thm2}:}
By \Cref{recursion-split-general}, 
$$
\cG^{(j)}_N(W, \cT) = \bigcup_{i \in S^{(j)}_L} \cG_{\frac{N}{L}}(W^{(i)}_{L}, \cT),
$$ 
where $S^{(j)}_L$ is defined in \eq{S-def}.
Therefore, channel polarization theorem for the single user case \cite{AT} imply that
\begin{align*}
\lim_{N \to \infty} \frac{\bigl|\cG^{(j)}_N(W, \cT)\bigr|}{N} &\geq \frac{1}{L} \sum_{i \in S^{(j)}_L} \lim_{N \to \infty} \frac{\bigl|\cG_{\frac{N}{L}}(W^{(i)}_{L},\cT)\bigr|}{N/L} \\
& = \frac{1}{L} \sum_{i \in S^{(j)}_L} I_i^{(\pi)} = R_j^{(\pi)}.
\end{align*}
Since the fraction of all good bit-channels can not exceed the sum-rate $\cI(W) = \sum_{j \in [m]} R_j^{(\pi)}$, the above inequality should be equality. This completes the proof of theorem.
\hspace*{\fill}~\QED\par\endtrivlist\unskip

\end{appendix}

\end{document}